\newtheorem{definition}{Definition}
\newtheorem{example}{Example}
\newtheorem{corollary}{Corollary}
\newtheorem{lemma}{Lemma}
\newtheorem{theorem}{Theorem}
\newtheorem{remark}{Remark}
\newcommand{\Diff}{\textrm{Diff}}
\newcommand{\Ind}{\textrm{Ind}}
\newcommand{\cost}{\textrm{cost}}
\newcommand{\dom}{\vartriangleright}
\newcommand{\dpath}{\vartriangleright\vartriangleright\vartriangleright}
\newcommand{\gpath}{\leadsto}
\newcommand{\splittt}{\textrm{split}}
\newcommand{\pref}{\succ}
\newcommand{\mypara}[1]{\smallskip\noindent\textbf{#1.}\quad}
\title{A Participatory Democratic Budgeting Algorithm}
\author{Ehud Shapiro \\
  Weizmann Institute of Science \\
  {\underline{ ehud.shapiro@weizmann.ac.il}} \\\And
  Nimrod Talmon \\
  Ben-Gurion University \\
  {\underline{ talmonn@bgu.ac.il} }}
\date{}
\begin{document}

\pagestyle{plain}

\maketitle

\begin{abstract}
The budget is the key means for effecting policy in democracies, yet its preparation is typically an excluding, opaque, and arcane process. We aim to rectify this by providing for the democratic creation of complete budgets --- for cooperatives, cities, or states.  Such budgets are typically (i) prepared, discussed, and voted upon by comparing and contrasting with last-year's budget; (ii) quantitative, in that items appear in quantities with potentially varying costs; and (iii) hierarchical, reflecting the organization's structure.   Our process can be used by a budget committee, the legislature or the electorate at large.

We allow great flexibility in vote elicitation, from perturbing last-year's budget to a complete ranked budget proposal.  We present a polynomial-time algorithm which takes such votes, last-year's budget, and a budget limit as input and produces a budget that is provably ``democratically optimal'' (Condorcet-consistent), in that no proposed change to it has majority support among the votes. 
\end{abstract}

\section{Introduction}

Even in the best democracies the budgeting process is somewhat of a mystery. Various people and bodies are involved, officially and unofficially, in the limelight and behind the scenes.
Arguments are made, pressures are mounted, and at the end of the process a budget proposal is placed in front of the governing body for approval.
While the governing body may question the proposed budget and request amendments, there is presently no feasible method for turning the individual preferences of the members of the governing body into a budget that faithfully reflects their joint democratic will. Here we provide such a method, which can be used by a budget committee, the parliament, or the citizens at large.

Participatory budgeting (PB)~\cite{cabannes2004participatory} 
 offers ways for a society to participate in constructing a budget.
Initiated by Brazil workers' party~\cite{wainwright2003making},
various societies have adopted similar ideas~\cite{ganuza2012power},
where usually only a small fraction of a municipality's annual budget is decided directly by the residents.
After a deliberative phase,
participants vote on projects such as schools, bike roads, etc.,
usually by some variants of $k$-Approval~\cite{kilgour2010approval},
where each resident specifies $k$ projects of highest priority
(other elicitation methods are mentioned below).

Current methods of participatory budgeting may be limited not only in their application, but also in their foundations: 
  (1) While the previous budget is typically the starting point of practical budgeting processes, current methods ignore it;
  (2) The elicitation methods are usually limited to approvals or simple pairwise comparisons;
  (3) The methods typically cannot handle budgeting of quantitative items (e.g., how many new school buses to buy?) with all their complexities (e.g., decreasing marginal costs);
  and
  (4) Current methods cannot deal with hierarchical budgeting.

Here we aim to address these issues.
As our method supports the democratic conduct of budget committees and parliaments, we term our approach \emph{democratic budgeting}.
We consider one of the foundations of social choice,
namely the Condorcet principle,
and generalize it to our setting:
  In single winner elections, the Condorcet approach involves selecting a winner which dominates all others by a voter majority;
a natural way to generalize it to budgets is to select a budget (i.e., a subset of the proposed items, the cost of which is within the budget limit)
which dominates all other budgets.

We describe a polynomial-time budgeting algorithm that computes a Condorcet-winning budget, whenever such a budget exists; if no such budget exists, it finds a member of the \emph{Smith set}, defined below.
We first prove the correctness of our algorithm on a simple model,
which is close to the standard model of participatory budgeting
(voters specify rankings, items are not quantitative), but takes the previous budget into consideration;
then, we enhance our model as discussed above and describe how our algorithm generalizes naturally and remains computationally efficient.

Our democratic budgeting scenario is natural and captures more situations than the usual model;
it is powerful and allows voters to specify both very simple preferences (e.g., approvals) and very complex ones (e.g., partial orders);
budgets computed by our algorithm satisfy the voters' wishes,
and cannot be argued against.
Our method allows for the construction of hierarchical budgets,
and thus can be applied to situations where a ``high-level'' budget
(say, a city budget) consists of several ``low-level'' budgets
(say, one for education, another for transportation, etc.).

\subsection{Related Work}\label{section:related work}

Our model differs from existing models of PB:
(1) We directly incorporate last year's budget, in accordance with Reality-aware Social Choice~\cite{realsoc};
(2) We employ a powerful and flexible elicitation method, accommodating partial orders (contrasted to Approval voting~\cite{cabannes2004participatory,aussieone}, Knapsack voting~\cite{goel2015knapsack}, Value voting, Value-for-money voting, and Threshold voting~\cite{benade2017preference});
(3) We allow quantifiable and indivisible items of various costs, allowing to express, e.g., decreasing marginal costs of budgeting certain amounts of the same item;
and
(4) We provide for hierarchical budget construction.

Within computational social choice~\cite{moulin2016handbook},
multiwinner elections~\cite{chaptermw} are receiving increased attention.
We generalize multiwinner elections 
as (1) We allow partial orders (compare to, e.g., \cite{csrhierarchy}; see also, e.g.,~\cite{xia2011determining})
and (2) We allow different costs (compare to, e.g., \cite{chaptermw}; see also, e.g., \cite{lu2011budgeted}).
Our generalization of the Condorcet criterion generalizes that of Fishburn~\cite{fishburn1981majority,fishburn1981analysis}
(studied further in~\cite{darmann2013hard} and~\cite{sekar2017condorcet};
different, element-wise concepts exist~\cite{kamwa2017stable,aziz2017condorcet}).
We concentrate on the more abstract minmax set extension~\cite{barbera2004ranking,aziz2016computing};
other set extensions are studied~\cite{darmann2013hard,aziz2016computing}).

Our efficient algorithm iteratively uses Schwartz's tournament solution. In this we mention the work of Bouysso~\cite{Bouy04a} which study such algorithms abstractly.

\section{Preliminaries}\label{section:preliminaries}

We provide preliminaries regarding rankings and tournaments.
For an integer $n \in \mathbb{N}$, we denote the set $\{1, \ldots, n\}$ by $[n]$.
Given a set $A$, a \emph{partial order} $\pref$ on $A$ is a reflexive, antisymmetric, and transitive relation.
A \emph{ranking} is a total order.
A \emph{weak ranking} is a total order with ties, and corresponds to an \emph{ordered partition}:
  Given a set $A$, an ordered partition $\pref$ is a partition of $A$ into linearly ordered and disjoint sets, referred to as \emph{components}, whose union is $A$.
E.g., given $A = \{1, 2, 3, 4, 5\}$,
the ordered partition $\{2\} \pref \{3, 4\} \pref \{1, 5\}$,
whose first, second, and third components are, correspondingly, $\{2\}$, $\{3, 4\}$, and $\{1, 5\}$
defines a weak order where, e.g., $2$ is preferred to $4$ while $3$ and $4$ tie.
We interchange \emph{ordered partitions}, \emph{weak orders}, and \emph{weak rankings}.
A \emph{tournament} is a directed graph.
A \emph{tournament solution} is a function $f$ which takes a tournament as an input and selects a nonempty subset of its vertices as output; i.e., $f : \mathcal{T}(A) \to 2^{A} \setminus \emptyset$, where $\mathcal{T}(A)$ is the set of all tournaments over $A$.
We use the following two tournament solutions.

\begin{definition}
A \emph{Schwartz component} of a tournament $T = (A, E)$ is
a minimal set $X \subseteq A$ of vertices of $T$
such that
for any $b \in A \setminus X$ there is no $a \in X$ such that $(b, a) \in E$.
The \emph{Schwartz set} is the union of all Schwartz components.
\end{definition}

\begin{definition}
The \emph{Smith set} of a tournament $T = (A, E)$ is
the minimal set $X \subseteq A$ of vertices of $T$
such that
for each $a \in X$ and  each $b \notin X$,
$(a, b) \in E$ holds.
\end{definition}

\section{The Basic Model}\label{section:basic model}

We describe the basic model,
where voters express preferences over distinct items,
and, given the previous budget,
we shall select a subset of these items whose total cost respects the predefined budget limit.
Formally,
a \emph{budget proposal} $\mathcal{P}$ is a set of distinct \emph{budget items},
where the \emph{cost} of a budget item $b \in \mathcal{P}$ is denoted by $\cost(b)$.
A \emph{vote profile} $\mathcal{V}$ is a set of \emph{votes},
where each vote $v \in \mathcal{V}$ is a linear order of the budget proposal $\mathcal{P}$, which specifies the preferences of the voter.
A \emph{budget} $B$ is a subset of the budget proposal $\mathcal{P}$ (sometimes called \emph{bundle} in the literature).
Given a \emph{budget limit} $\ell$, a budget $B$ is \emph{feasible} if its cost is within the budget limit, i.e., if $\sum_{b \in B} \cost(b) \leq \ell$. 
Members of $B$ are \emph{budgeted items} while members of $\mathcal{P} \setminus B$ are \emph{unbudgeted items}.
A budget $B$ is \emph{exhaustive} if it is feasible and for no item $b \in \mathcal{P} \setminus B$,
the budget $B \cup \{b\}$ is feasible.

A \emph{budgeting algorithm} gets as input the tuple $(\mathcal{P}, \mathcal{V}, \ell, B')$,
where
$\mathcal{P}$ is a budget proposal,
$\mathcal{V}$ is a vote profile,
$\ell$ is a budget limit,
and $B_{-1} \subseteq \mathcal{P}$ is the previous (e.g., last year's) budget.
Given this input, a budgeting algorithm computes a feasible budget $B \subseteq \mathcal{P}$.
A budgeting algorithm is \emph{exhaustive} if it computes only exhaustive budgets.
We wish that the budget $B$ computed by a budgeting algorithm,
in addition to being feasible and exhaustive,
will also reflect the preferences of the votes of the vote profile $\mathcal{V}$,
and we will use the previous budget $B_{-1}$ as a guiding compass.

\begin{example}
Consider the following simple budgeting scenario.
The budget proposal $\mathcal{P}$ contains the items: $a$, $b$, and $c$.
The cost of $a$ is $1$, the cost of $b$ is $2$, and the cost of $c$ is $4$.
The vote profile $\mathcal{V}$ contains the votes:
  $v_1 : a \pref b \pref c$
  and
  $v_2 : c \pref a \pref b$.
The budget limit is $\ell = 3$.
Then, $\{a\}$, $\{b\}$, and $\{a, b\}$ are all feasible,
while $\{a, b\}$ is the only exhaustive budget.
An exhaustive budgeting algorithm,
given the tuple $(\mathcal{P}, \mathcal{V}, \ell, B')$,
where $B_{-1} = \{a\}$ is the previous budget,
shall output the budget $B = \{a, b\}$.
\end{example}

\subsection{Condorcet-winning Budgets}\label{section:axioms}

We are interested in applying the Condorcet principle to budgets.
In single-winner elections, the Condorcet principle states that if there is a candidate $c$ such that, for each other candidate $c' \neq c$, more voters prefer $c$ to $c'$, then $c$ shall be chosen.

To apply the Condorcet principle to budgets,
a notion of voter preference among budgets shall be derived from voter preference among budget items.  Fishburn et al.~\cite{fishburn1981majority,fishburn1981analysis} generalized the notion of Condorcet winner to multiwinner elections (where the aim is to select a set of unit-cost items); we further generalize it to sets of items of arbitrary costs. To achieve that, we apply the \emph{minmax} set extension~\cite{barbera2004ranking} to our budgeting setting:
  Intuitively, a voter prefers one budget over another if it ranks higher all the items budgeted by the first budget but not the second, compared to all items budgeted by the second budget but not by the first. 

\newcommand{\pos}{\textrm{pos}}

\begin{definition}[Position]\label{def:position}
Let $v : b_1 \pref b_2 \pref \dots \pref b_{|\mathcal{P}|}$ be a vote over a proposal $\mathcal{P}$.
We define \emph{$\pos_v(B)$, the positions of a budget $B \subseteq \mathcal{P}$ in $v$}, to be $\pos_v(B) = \{i: b_i \in B\}$.
\end{definition}

E.g., for $v : b_1 \pref b_2 \pref b_3$, $\pos_v(\{b_1, b_3\}) = \{1, 3\}$.

\begin{definition}[Prefers]\label{def:prefers}
Let $v : b_1 \pref b_2 \pref \dots \pref b_{|\mathcal{P}|}$ be a vote over a proposal $\mathcal{P}$. Let $B \subseteq \mathcal{P}$ and $B' \subseteq \mathcal{P}$ be two budgets.
Then, $v$ \emph{prefers} $B$ over $B'$ if $\max(\pos_v(B \setminus B')) < \min(\pos_v(B' \setminus B))$, where
$\max(\emptyset) = \min(\emptyset) = \infty$.
\end{definition}

\begin{example}
Let $\mathcal{P} = \{b_1, b_2, b_3, b_4\}$ be a proposal, where
each item costs $1$ and let $v : b_1 \pref b_2 \pref b_3 \pref b_4$ be a vote.
Let $B = \{b_1, b_2, b_3\}$ and $B' = \{b_1, b_2, b_4\}$ be two budgets.
Then,
$\pos_v(B \setminus B') = \pos_v(\{b_3\}) = \{3\}$
while
$\pos_v(B' \setminus B) = \pos_v(\{b_4\}) = \{4\}$,
thus,
$3 = \max(\pos_v(B \setminus B')) < \min(\pos_v(B' \setminus B)) = 4$,
and so $v$ prefers $B$ over $B'$. 
\end{example}

Our definition is conservative as it may refrain from judging one budget as preferable to another even in cases when one might be intuitively inclined to make such a judgment. We find that it offers a good balance in being restrictive, and thus making only solid judgments, but powerful enough to provide a practical foundation for the following definitions and algorithm.
Other options to define voter preference are discussed in Section~\ref{section:outlook}.
Next we describe what it means for one budget to dominate another. Note that we use strict rather than relative majority in defining this notion.

\begin{definition}[Dominance]
Let $\mathcal{V}$ be a profile and $B$ and $B'$ budgets over a proposal $\mathcal{P}$.
We say that
(1) $B$ \emph{dominates} $B'$
if the number of votes preferring $B$ over~$B'$
is larger than $\frac{1}{2}|V|$;
(2) $B$ \emph{weakly dominates} $B'$, denoted by $B \dom B'$, if $B$ is not dominated by $B'$.
\end{definition}

The weak dominance relation $\dom$ is reflexive ($B \dom B$ for all $B$) and total ($B \dom B'$ or $B' \dom B$ or both for all $B$ and $B'$) but not transitive (see Example~\ref{example:paradox}). 

\begin{definition}[Condorcet-winning budget]
Given a vote profile $\mathcal{V}$ over a budget proposal $\mathcal{P}$ and a budget limit $\ell$,
a feasible budget $B$ of $\mathcal{P}$ is a \emph{Condorcet-winning budget} if 
$B$ dominates all other feasible budgets.
\end{definition}

A Condorcet-winning budget cannot be argued against,
as there is no majority to support any change to it.
Condorcet-winning budgets are also exhaustive.
Such budgets, however, do not always exist.

\begin{example}\label{example:paradox}
Consider the budget proposal $\mathcal{P} = \{b_1, b_2, b_3\}$,
where the cost of each item is $1$,
and the following vote profile over it:
$v_1 : b_1 \pref b_2 \pref b_3$,
$v_2 : b_2 \pref b_3 \pref b_1$,
$v_3 : b_3 \pref b_1 \pref b_2$.
Following the symmetry of the profile,
we can assume,
without loss of generality,
that for the budget limit $\ell = 1$,
any budget $B$ which is within limit $\ell$ equals to one of the following two options:
$B_0 = \emptyset$;
or $B_1 = \{b_1\}$.
Further,
budget $B_0$ is not exhaustive, since, e.g., $b_1$ can fit within limit, and hence not a Condorcet winner (as Condorcet-winning budgets are exhaustive).
Regarding budget $B_1$, both $v_2$ and $v_3$ prefer the budget $B_3 = \{b_3\}$ over  $B_1 = \{b_1\}$ and hence $B_1$ is not a Condorcet winner.
\end{example}

\subsection{An Efficient Condorcet-consistent Budgeting Algorithm}

A \emph{Condorcet-consistent budgeting algorithm} is a budgeting algorithm that returns a Condorcet-winning budget whenever such exists.
Consider the following naive algorithm:
  First, it creates a directed graph, referred to as the \emph{budgets graph}, which contains a vertex for each feasible budget and an arc from vertex $B$ to vertex $B'$ if $B$ dominates $B'$.
Given this budgets graph,
the algorithm returns the budget corresponding to a vertex with outgoing arcs to all other budgets,
if such a vertex exists.
Indeed, this budgeting algorithm is Condorcet-consistent.
As there are exponentially-many feasible budgets, the naive algorithm described above runs in exponential time.

We describe a polynomial-time Smith-consistent budgeting algorithm (SBA):
  It always returns a budget from the Smith set of the budgets graph (see Preliminaries, Section~\ref{section:preliminaries}). If a Condorcet-winning budget exists, then the Smith set is its singleton, thus our algorithm is also Condorcet-consistent.
A pseudo-code of SBA is in Algorithm~\ref{algorithm:dob} (notice that we use the previous budget $B_{-1}$; we defer a discussion on it to Section~\ref{section:hysteresis}).
it is composed of these procedures:
  A \textbf{Ranking procedure} which,
  given a vote profile $\mathcal{V}$ over a budget proposal $\mathcal{P}$, aggregates the votes in $\mathcal{V}$ into an ordered partition and outputs it;
  and a \textbf{Pruning procedure} which,
  given an ordered partition and a budget limit, outputs a feasible budget.

\begin{algorithm}[t]
  \caption{Smith-consistent budgeting (SBA)}\label{algorithm:dob}
\begin{algorithmic}[1]

\Procedure{Ranking}{}
\State \textbf{input: } a budget proposal $\mathcal{P}$
\State \textbf{input: } a vote profile $\mathcal{V}$
\State $V \gets \emptyset$
\While{$G \neq \emptyset$}
  \State $S \gets \textrm{Schwartz-set}(G)$
  \State\label{line:schwartz} $V \gets V \pref S$; $G \gets G \setminus S$
\EndWhile
\State \textbf{return} $V$
\EndProcedure
\ \\
\Procedure{Pruning}{}
\State \textbf{input: } a budget proposal $\mathcal{P}$
\State \textbf{input: } an ordered partition $V : C_1 \pref \dots \pref C_z$
\State \textbf{input: } a budget limit $\ell$
\State \textbf{input: } the previous budget $B_{-1}$
\State $B \gets \emptyset$; 
\For{$i \in [z]$}
  \State\label{line:greedy} $B_i \gets$ a maximal subset of $C_i$ closest to $B_{-1}$ with $\cost(B \cup B_i) \leq \ell$
  \State $B \gets B \cup B_i$
\EndFor
\State \textbf{return} $B$
\EndProcedure
\ \\
\Procedure{SBA}{}
\State \textbf{input: } a budget proposal $\mathcal{P}$
\State \textbf{input: } a vote profile $\mathcal{V}$
\State \textbf{input: } a budget limit $\ell$
\State \textbf{input: } the previous budget $B_{-1}$
\State $V \gets \textsc{Ranking}(\mathcal{P}, \mathcal{V})$
\State $B \gets \textsc{Pruning}(\mathcal{P}, V, \ell, B_{-1})$
\State \textbf{return} $B$
\EndProcedure

\end{algorithmic}
\end{algorithm}

The following notions play a key role in the description of the algorithm and its proof.

\begin{definition}[Majority graph, weak majority graph]
Let $\mathcal{P}$ be a budget proposal and `$\mathcal{V}$ be a vote profile.  The \emph{majority graph} of $\mathcal{P}$ and $\mathcal{V}$ has a vertex for each item $b \in \mathcal{P}$ and an arc between any two items $b, b' \in \mathcal{P}$ if more than half of the votes rank $b'$ higher than $b$.

Its corresponding \emph{weak majority graph} has, in addition, arcs $(b,b')$ and $(b',b)$ for every $b, b' \in \mathcal{P}$ for which the majority graph has neither arc.
\end{definition}

The \textbf{Ranking procedure} first creates the \emph{majority graph}.
Then, it iteratively construct the ordered partition~$V$, initiated to be empty.  In each iteration, it identifies the Schwartz set (see Section~\ref{section:preliminaries}), adds its vertices as the next component of $V$ and removes them from the majority graph.
When no vertices remain in the majority graph it halts, at which point $V$ is indeed an ordered partition of $\mathcal{P}$.

The \textbf{Pruning procedure}, given the ordered partition $V : C_1 \pref C_2 \pref \dots \pref C_z$ computed by the ranking procedure,
gradually populates the set of budgeted items $B$.
It iterates over the components of $V$, where the $i^{th}$ iteration considers the $i^{th}$ component $C_i$.
It chooses a maximal subset $B_i$ of $C_i$ such that the cost of $B \cup B_i$ remains within limit;
if several such subsets exist,
it chooses one which is the closest to the previous budget $B_{-1}$,
where closeness between two budgets is measured by the total cost of the symmetric difference between them.
In particular, observe that if $\cost(B \cup C_i) \leq \ell$ then $B_i = C_i$.
If this is not the case, then some items of $C_i$ will remain unbudgeted, but budgeting any of them would cause $B$ to go over the limit.
After considering all components of $V$, it outputs the budget $B$.

\subsection{SBA is Condorcet-consistent}

Here we prove that SBA is Smith-consistent, and thus in particular Condorcet-consistent (see Section~\ref{section:preliminaries}).
We refer to a budget as a \emph{Smith budget} if it is in the Smith set and show that our algorithm computes only Smith budgets.
To do so we introduce the following notions.

\begin{definition}[Weak domination path]
There is a \emph{weak domination path} (\emph{path} for short) in the budgets graph, from a budget $B$ to a budget $B'$,
denoted by $B \dpath B'$,
if there is a sequence of budgets $(B_1, B_2, \ldots, B_k)$, $k \geq 2$, such that $B = B_1$, $B_k = B'$, and $B_i \dom B_{i + 1}$ for each $i \in [k - 1]$.
\end{definition}

Consider a budget $B'$ which is not in the Smith set and another budget $B$ which is in the Smith set.
By definition, $B'$ is dominated by $B$,
and thus $B'$ does not weakly dominate $B$.
Further, there is no path from $B'$ to $B$,
since $B'$ does not dominate any budget in the Smith set,
and this also holds for all other budgets which are, like $B'$, not in the Smith set.
We conclude, by considering the counterpositive,
that in order to prove that SBA is Smith-consistent,
it is sufficient to show that any budget it computes has paths to all other budgets.

\begin{corollary}
Given a vote profile $\mathcal{V}$ over a budget proposal $\mathcal{P}$ and a budget limit $\ell$,
a budget $B$ is a Smith budget if and only if $B \dpath B'$ holds for any feasible budget $B'$ of $\mathcal{P}$.
\end{corollary}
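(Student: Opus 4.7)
I would separate the two directions of the equivalence, starting with the easier ``$\Leftarrow$'' direction by contrapositive, since its essential idea is already sketched in the paragraph preceding the corollary. Suppose $B$ is not in the Smith set; pick any Smith budget $B^*$ and consider any purported sequence $B = B_1, \dots, B_k = B^*$ of weak-domination steps. Let $i$ be the first index with $B_i$ inside the Smith set: then $B_{i-1}$ is outside it while $B_i$ is inside, so the defining property of the Smith set forces $B_i$ to strictly dominate $B_{i-1}$, which contradicts $B_{i-1} \dom B_i$. Hence no weak-domination path from $B$ to $B^*$ exists, completing this direction.

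For the ``$\Rightarrow$'' direction, assume $B$ lies in the Smith set $X$ and let $B'$ be any feasible budget. If $B' \notin X$, the Smith property gives that $B$ strictly dominates $B'$; since no voter can simultaneously prefer two distinct budgets over each other, strict majorities cannot go both ways, so $B \dom B'$, a length-one path. The substantive case is $B, B' \in X$ with $B \ne B'$, where I need internal connectivity of $X$ in the weak-domination graph. The plan is to fix $B$ and set $Y = \{B\} \cup \{y \in X : B \dpath y\}$, and argue by contradiction that $Y = X$. For any $y \in Y$ and $z \in X \setminus Y$, if $y \dom z$ held, then extending a $B$-to-$y$ path by the arc $y \to z$ would place $z$ in $Y$; so $y \not\dom z$, which by definition means $z$ strictly dominates $y$.

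The main obstacle, and the crux of the argument, is converting this local observation into a contradiction with the minimality of $X$. I would do so by showing that $X \setminus Y$ is itself a dominant set of feasible budgets: for $a \in X \setminus Y$ and $b \notin X \setminus Y$, either $b \in Y$ (in which case the preceding step gives $a$ strictly dominates $b$), or $b \notin X$ (in which case $a \in X$ strictly dominates $b$ by the defining Smith property of $X$). Since $B \in Y$ forces $X \setminus Y$ to be a proper subset of $X$, this strictly smaller dominant set contradicts the minimality in the definition of the Smith set, so $Y = X$ and hence $B \dpath B'$ as required. This step adapts to the weak-domination graph on feasible budgets the classical argument that the top cycle of a tournament is strongly connected.
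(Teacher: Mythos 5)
Your proof is correct, and it does more than the paper does. For the ``$\Leftarrow$'' direction your contrapositive argument is essentially the paper's own justification: the paragraph preceding the corollary argues exactly that no budget outside the Smith set weakly dominates any budget inside it, so a weak-domination path can never cross from outside the Smith set into it; this is the only direction the paper establishes, and the only one used later in the proof of Theorem~\ref{theorem:mainnew}. The ``$\Rightarrow$'' direction (Smith budget implies paths to all feasible budgets) is asserted by the corollary but never proved in the paper, and you supply it with the classical top-cycle connectivity argument adapted to the weak-domination graph: the reachability set $Y$, the observation that every $z \in X \setminus Y$ must strictly dominate every $y \in Y$ (using totality of $\dom$, i.e.\ $\neg(y \dom z)$ means $z$ dominates $y$), and the conclusion that a nonempty $X \setminus Y \subsetneq X$ would be a dominant set contradicting minimality of the Smith set. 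Your handling of the easy subcase $B' \notin X$ correctly notes that strict domination yields weak domination because the two strict majorities would have to intersect in a voter preferring each budget over the other, which Definition~\ref{def:prefers} forbids. The only (trivial) omission is the case $B' = B$, which is immediate from the reflexivity of $\dom$ stated in the paper. So: same route as the paper where the paper gives one, plus a correct and genuinely new argument for the converse direction that the paper leaves unproved.
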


A \emph{path} in a directed graph is a sequence of arcs, where the second vertex of an arc in the path is the first vertex of the next arc in the path. It will be useful to consider paths in the weak majority graph, referred to as \emph{weak majority paths}. We denote the existence of a weak majority path from one item $b$ to another $b'$ by $b \gpath b'$.
The lemmas below relate the weak majority graph to the budgets graph
(recall that the former is polynomial and has budget items as vertices, while the latter is exponential and has feasible budgets as vertices).

\begin{lemma}[Weak majority arc implies domination]\label{lemma:simplesimplecrucial}
  Let $b$ and $b'$ be two vertices in the majority graph computed by SBA for some profile $\mathcal{V}$ over some proposal $\mathcal{P}$. Let $B = \{b\}$ and $B' = \{b'\}$. If the arc $(b, b')$ is present in the weak majority graph then $B \dom B'$.
\end{lemma}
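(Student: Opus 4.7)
The plan is to reduce the statement to a direct unpacking of definitions. The only subtlety is that the claim concerns singletons, where the \emph{minmax} set extension from Definition~\ref{def:prefers} collapses to plain pairwise preference over items.

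First I would observe that for singleton budgets $B=\{b\}$ and $B'=\{b'\}$ with $b\neq b'$, we have $B'\setminus B=\{b'\}$ and $B\setminus B'=\{b\}$, so Definition~\ref{def:prefers} says that a vote $v$ prefers $B'$ over $B$ precisely when $\pos_v(b') < \pos_v(b)$, i.e., when $v$ ranks $b'$ strictly above $b$. Since each vote is a linear order, for every $v$ exactly one of ``$v$ ranks $b'$ above $b$'' or ``$v$ ranks $b$ above $b'$'' holds. Consequently, the number of votes preferring $B'$ over $B$ equals the number of votes ranking $b'$ above $b$.

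Next I would translate the hypothesis using the definitions of the two graphs. The weak majority graph contains the arc $(b,b')$ exactly when either (i) $(b,b')$ is present in the majority graph, or (ii) neither $(b,b')$ nor $(b',b)$ is present in it. In case (i), strictly more than $\tfrac{1}{2}|\mathcal{V}|$ of the votes rank $b$ above $b'$, and hence strictly fewer than $\tfrac{1}{2}|\mathcal{V}|$ rank $b'$ above $b$. In case (ii), no strict majority exists for either direction, so at most $\tfrac{1}{2}|\mathcal{V}|$ of the votes rank $b'$ above $b$. In both cases the number of votes preferring $B'$ over $B$ is at most $\tfrac{1}{2}|\mathcal{V}|$, so by the definition of dominance, $B'$ does not dominate $B$; equivalently, $B \dom B'$.

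No combinatorial obstacle arises here. The only point requiring care is matching the orientation convention of the majority graph with the direction of preference between items (and the strict-majority threshold underlying dominance). The lemma is essentially a sanity check confirming that, on singletons, the polynomial-size weak majority graph faithfully encodes weak dominance in the exponential-size budgets graph, which is exactly the bridge the subsequent arguments will need.
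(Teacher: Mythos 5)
Your proof is correct and matches the paper's argument in substance: both reduce singleton preference to pairwise ranking of $b$ and $b'$ and invoke the strict-majority threshold in the definition of dominance. The only difference is presentational --- the paper argues by contradiction (a majority preferring $B'$ would force the arc $(b',b)$ into the majority graph, contradicting $(b,b')$ being in the weak majority graph), while you unpack the same definitions directly via the two cases in which $(b,b')$ lies in the weak majority graph.
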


\begin{proof}
Towards a contradiction, assume that $B$ does not weakly dominate $B'$,
so $B'$ dominates $B$.
Let $M$ be a set of more than half of the voters where each voter $v \in M$ prefers $B'$ to $B$.
By definition, these voters rank $b'$ higher than they rank $b$.
Thus, the arc $(b', b)$ is present in the majority graph,
in contradiction to the assumption that $(b, b')$ is in the weak majority graph.
\end{proof}

\begin{lemma}[Weak majority path implies domination path]\label{lemma:simplecrucial}
  Let $b$ and $b'$ be two vertices in the majority graph computed by SBA for some vote profile $\mathcal{V}$ over some budget proposal $\mathcal{P}$. Let $B = \{b\}$ and $B' = \{b'\}$. If $b \gpath b'$ then $B \dpath B'$.
\end{lemma}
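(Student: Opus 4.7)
The plan is to reduce this to Lemma~\ref{lemma:simplesimplecrucial} by an argument on the length of the weak majority path. Concretely, if $b \gpath b'$, then by definition there is a sequence of items $b = c_0, c_1, \ldots, c_k = b'$ such that each consecutive pair $(c_i, c_{i+1})$ is an arc of the weak majority graph.

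I would then apply Lemma~\ref{lemma:simplesimplecrucial} to every consecutive pair: for each $i \in \{0, \ldots, k-1\}$, the arc $(c_i, c_{i+1})$ in the weak majority graph implies $\{c_i\} \dom \{c_{i+1}\}$ in the budgets graph. Setting $B_i = \{c_i\}$, the sequence $(B_0, B_1, \ldots, B_k)$ of singleton budgets then satisfies the definition of a weak domination path: $B_0 = \{b\}$, $B_k = \{b'\}$, and $B_i \dom B_{i+1}$ for every $i$. Hence $\{b\} \dpath \{b'\}$.

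The proof is essentially bookkeeping — one can equivalently phrase it as an induction on $k$, with base case $k=1$ being exactly Lemma~\ref{lemma:simplesimplecrucial} and the inductive step appending one more singleton to the domination path produced by the hypothesis. I do not foresee any obstacle here, since both the weak majority graph and the budgets graph use the same vertex for a singleton budget, and the conversion from arcs to weak-domination steps is precisely the content of the preceding lemma. The only thing to keep in mind is that the feasibility of each singleton $\{c_i\}$ (needed for it to be a vertex of the budgets graph) is implicit: items are assumed to be individually budgetable, which is the same implicit assumption already made in the statement of Lemma~\ref{lemma:simplesimplecrucial}.
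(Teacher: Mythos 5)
Your proof is correct and follows essentially the same route as the paper: unfold the weak majority path into its arcs, apply Lemma~\ref{lemma:simplesimplecrucial} to each consecutive pair of singleton budgets, and chain the resulting weak-dominance steps into a weak domination path. The inductive phrasing and the remark about feasibility of singletons are harmless additions; no gap here.
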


\begin{proof}
Since $b \gpath b'$ it follows that there is a set of vertices $b_1, \ldots, b_t$ such that $b = b_1$, $b_t = b'$, and the arc $(b_i, b_{i + 1})$ is present in the weak majority graph for each $i \in [t - 1]$. From Lemma~\ref{lemma:simplesimplecrucial} it follows that $B_{b_i} \dom B_{b_{i + 1}}$ holds for each $i \in [t - 1]$, where $B_{B_i} = \{b_i\}$ and $B_{B_{i + 1}} = \{b_{i + 1}\}$. We conclude that $B \dpath B'$.
\end{proof}

\begin{lemma}[Crucial lemma]\label{lemma:crucial}
  Let $\mathcal{P}$ be a budget proposal, $\ell$ be a budget limit, $B, B' \subseteq \mathcal{P}$ two feasible budgets, and $\mathcal{V}$ be a vote profile over $\mathcal{P}$.
Then, if $b \gpath b'$ for some $b \in B \setminus B'$ and $b' \in B' \setminus B$,
  then $B \dpath B'$.
\end{lemma}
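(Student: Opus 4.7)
The plan is to prove the lemma by induction on the length $t$ of the weak-majority path $b = b_1, \dots, b_t = b'$, combined with an auxiliary induction on $|B \triangle B'|$. The idea is to lift the singleton argument of Lemma~\ref{lemma:simplecrucial} to arbitrary feasible budgets by building an explicit weak-dominance path that walks along the weak-majority path while simultaneously replacing items of $B \setminus B'$ with items of $B' \setminus B$.

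For the base case $t = 2$, the weak-majority path reduces to a single edge $(b, b')$. I would construct an intermediate feasible budget $\tilde B = (B \setminus Y) \cup \{b'\}$, where $Y \subseteq B \setminus B'$ with $b \in Y$ is chosen minimally so that $\tilde B$ is feasible. Such a $Y$ always exists entirely within $B \setminus B'$ because
\[
\cost(B \cap B') + \cost(b') \le \cost(B') \le \ell,
\]
so removing all of $B \setminus B'$ would already leave enough room for $b'$. Then $B \dom \tilde B$ by the same argument as Lemma~\ref{lemma:simplesimplecrucial}: any voter preferring $\tilde B$ over $B$ must rank $b'$ above every item of $Y$, in particular above $b$, yet a strict majority cannot do so since $(b, b')$ lies in the weak-majority graph. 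Because $Y \subseteq B \setminus B'$ and no items of $B \cap B'$ are touched, $|\tilde B \triangle B'| < |B \triangle B'|$, and the auxiliary induction on $|B \triangle B'|$ closes the argument by reinvoking the lemma on the pair $(\tilde B, B')$.

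For the inductive step on $t > 2$, I would apply the base-case swap using the first edge $(b_1, b_2)$ to obtain $\tilde B$ with $B \dom \tilde B$. If $b_2 \notin B'$, then $b_2 \in \tilde B \setminus B'$, and the shorter path $b_2, \dots, b_t = b'$ directly witnesses the induction hypothesis for $(\tilde B, B')$, giving $\tilde B \dpath B'$ and hence $B \dpath B'$. If instead $b_2 \in B'$, then the swap strictly reduced $|B \triangle B'|$ while fixing $B \cap B'$, and the auxiliary induction closes the argument.

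The main obstacle I anticipate is ensuring that after each reduction the weak-majority-path hypothesis can be re-instantiated between the new intermediate budget and $B'$: once the original path is consumed there is no a priori reason a fresh path from $\tilde B \setminus B'$ to $B' \setminus \tilde B$ exists. I would address this either by strengthening the induction hypothesis to carry around a witness path at every stage, or by amalgamating all the swaps into a single pass that absorbs additional items of $B' \setminus B$ inside each path-edge step --- coupling each removal from $B \setminus B'$ with several simultaneous additions from $B' \setminus B$, and verifying for each such enlarged swap that no strict majority ranks every added item above the removed one, again via the weak-majority edge in use.
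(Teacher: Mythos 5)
Your plan has a genuine gap, and it is exactly the one you flag at the end: the recursion never re-establishes its own hypothesis. In the base case you ``close the argument by reinvoking the lemma on the pair $(\tilde B, B')$'', but the lemma requires a weak majority path from some element of $\tilde B \setminus B'$ to some element of $B' \setminus \tilde B$, and after the swap the only edge you had, $(b,b')$, is useless ($b$ has been removed and $b'$ now lies in $\tilde B \cap B'$); nothing guarantees any suitably directed path between the new difference sets, so the auxiliary induction on $|B \triangle B'|$ simply does not go through. The same problem recurs in the inductive step when $b_2 \in B'$. Neither proposed repair is worked out, and the second one (``amalgamating all the swaps into a single pass'') is really a different proof that still needs its key step supplied. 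There is also a secondary flaw in the inductive step: when $b_2 \notin B'$ your feasibility argument for $\tilde B = (B \setminus Y) \cup \{b_2\}$ collapses, because it relied on $b' \in B'$ (so that $(B \cap B') \cup \{b'\} \subseteq B'$ is feasible); an item $b_2$ outside $B'$ may be too expensive to insert even after deleting all of $B \setminus B'$.

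The paper avoids all of this by never trying to keep the intermediate budgets close to $B$ or $B'$. Under the minmax preference no voter can prefer a budget over a superset of it, so $B \dom \{b\}$ is free; then Lemma~\ref{lemma:simplecrucial} hops singleton-to-singleton along the weak majority path $b = b_1, \ldots, b_t = b'$; and at the first index $j$ with $b_j \notin B'$ and $b_{j+1} \in B'$ one jumps directly from the singleton to $B'$: any voter preferring $B'$ over $\{b_j\}$ must rank every element of $B' \setminus \{b_j\}$, in particular $b_{j+1}$, above $b_j$, and the weak majority arc $(b_j, b_{j+1})$ says at most half the voters do, hence $\{b_j\} \dom B'$. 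Chaining gives $B \dom \{b_1\} \dom \cdots \dom \{b_j\} \dom B'$, i.e.\ $B \dpath B'$, with no induction, no swaps, and no need to maintain a path hypothesis. Your ``amalgamation'' idea is morally this last counting argument; if you want to salvage your write-up, replace the swap-and-recurse scheme by the singleton route and use the counting argument only once, at the first edge of the path that enters $B'$.
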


\begin{proof}
Let $b \in B \setminus B'$ and $b' \in B' \setminus B$ such that $b \gpath b'$.
Let $B_b = \{b\}$ and notice that $B_b$ is feasible since $B_b \subseteq B$ and $B$ is feasible.
Further, $B \dpath B_b$ since in particular $B \dom B_b$ as $B_b \subseteq B$.
In order to show that $B \dpath B'$ it remains to show that $B_b \dpath B'$.
Since $b \gpath b'$ there is a set of vertices $b_1, \ldots, b_t$ such that $b = b_1$, $b_t = b'$,
and $(b_i, b_{i + 1})$ is in the weak majority graph for each $i \in [t - 1]$.
Notice that $b = b_1 \notin B'$ and that $b' = b_t \in B'$.
Let $j$ ($j \in [t - 1]$) be the smallest index for which $b_j \notin B'$  but $b_{j + 1} \in B'$.
(Indeed, it might be that $b' = b_{j + 1}$;
this would happen if the weak majority path from $b$ to $b'$ does not go through any vertex in $B'$.)
Using Lemma~\ref{lemma:simplecrucial} and since $b \gpath b_j$ it follows that $B_b \dpath B_j$ where $B_j = \{b_j\}$,
thus it remains to show that $B_j \dom B'$.
Using Lemma~\ref{lemma:simplesimplecrucial},
and since the arc $(b_j, b_{j+1})$ is in the weak majority graph,
it follows that $B_j \dom B_{j+1}$ where $B_{j+1} = \{j+1\}$.
Thus, for each set $M$ of more than half of the voters
there exists at least one voter $v \in M$ which ranks $b_j$ before it ranks $b_{j+1}$.
Since $b_j \notin B'$ but $b_{j+1} \in B'$,
it then follows that
$\min(\pos_v(B_j \setminus B')) = \min(\pos_v(B_j)) = \pos_v(B_j) < \pos_v(B_{j+1}) \leq \max(\pos_v(B' \setminus B_j))$.
As such a voter $v$ exists in every set $M$ containing more than half of the voters, it follows that $B_j \dom B'$.
\end{proof}

\begin{theorem}\label{theorem:mainnew}
  SBA is Smith-consistent.
\end{theorem}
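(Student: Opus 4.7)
The plan is to combine Lemma~\ref{lemma:crucial} with the Corollary above: by the Corollary, showing that the budget $B$ output by SBA is a Smith budget reduces to proving that $B \dpath B'$ for every feasible budget $B'$. I would extract this from a structural \emph{Path Lemma} about the ordered partition $V : C_1 \pref C_2 \pref \dots \pref C_z$ produced by the ranking procedure, together with the easy observation that SBA's greedy pruning is exhaustive.

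\textbf{Path Lemma:} for every $b \in C_i$ and $b' \in C_j$ with $i \leq j$, there is a weak majority path $b \gpath b'$. I would argue in two cases. If $i < j$, then at iteration $i$ of the ranking procedure $C_i$ is the Schwartz set of the current graph $G_i$ on vertices $C_i \cup C_{i+1} \cup \dots \cup C_z$; by definition of Schwartz set, $b' \in C_j$ has no arc $(b', b)$ into $C_i$ in the majority graph, so the weak majority graph contains $(b, b')$. If $i = j$, I would use two facts: (a) each Schwartz component of $C_i$ is a minimal dominant set in $G_i$ and therefore strongly connected in the majority graph (otherwise a source SCC of its induced condensation would itself be dominant, contradicting minimality), and (b) between two distinct Schwartz components of $C_i$ no arc can exist in either direction in the majority graph (any such arc would contradict the target component being a minimal dominant set in $G_i$). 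Combining (a) with (b) and the tie-induced double arcs of the weak majority graph, the set $C_i$ is strongly connected in the weak majority graph.

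Next I would note that SBA is exhaustive: if $b \in C_i \setminus B_i$, then the maximality of $B_i$ inside $C_i$ forces $\cost(B_1 \cup \dots \cup B_i \cup \{b\}) > \ell$, and since $B \supseteq B_1 \cup \dots \cup B_i$ this yields $\cost(B \cup \{b\}) > \ell$. Now fix any feasible $B'$. If $B' \subseteq B$, then for every voter $v$ we have $\pos_v(B' \setminus B) = \emptyset$ with minimum $\infty$, so $v$ prefers $B$ to $B'$, giving $B \dom B'$ and hence $B \dpath B'$. Otherwise pick $b' \in B' \setminus B$, say $b' \in C_j$. I claim there exists $b \in (B_1 \cup \dots \cup B_j) \setminus B'$: if instead $B_1 \cup \dots \cup B_j \subseteq B'$, then $B_1 \cup \dots \cup B_j \cup \{b'\} \subseteq B'$ and feasibility of $B'$ would give $\cost(B_1 \cup \dots \cup B_j \cup \{b'\}) \leq \ell$, contradicting the maximality of $B_j$ inside $C_j$. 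Such a $b$ lies in some $C_i$ with $i \leq j$, so by the Path Lemma $b \gpath b'$, and Lemma~\ref{lemma:crucial} applied to $b \in B \setminus B'$ and $b' \in B' \setminus B$ delivers $B \dpath B'$.

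The main subtlety is the $i = j$ case of the Path Lemma, which is the only place the weak (rather than strict) majority graph is essential: distinct Schwartz components inside a single Schwartz set can only be bridged through tie-induced double arcs, and verifying that such ties must exist relies on the minimality characterization of dominant sets. The rest is a bookkeeping argument combining the greedy maximality of the pruning with the feasibility of $B'$ and a direct application of the preceding lemmas.
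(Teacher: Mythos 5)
Your proposal is correct and takes essentially the same route as the paper's proof: it reduces Smith membership to exhibiting weak-domination paths to every feasible $B'$, uses the maximality of the pruning step together with feasibility of $B'$ to produce $b \in B \setminus B'$ lying in a component no later than that of $b' \in B' \setminus B$, and then applies Lemma~\ref{lemma:crucial}, with your ``Path Lemma'' being a packaged version of the paper's two-case analysis of the Schwartz structure. The only cosmetic differences are that you pick an arbitrary $b' \in B' \setminus B$ rather than the paper's smallest offending component index, and that you explicitly justify the strong connectivity of Schwartz components and the absence of arcs between them, which the paper asserts as folklore.
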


\begin{proof}
Consider the ordered partition $V : C_1 \pref \dots \pref C_z$ produced by the Ranking procedure (each $C_j$ is a maximum subset of the $j$th Schwartz set).
Consider the budget $B$ produced by the Pruning procedure,
based on the ordered partition $V$ and let $B_i = B \cap C_i, i \in [z]$;
note that the $B_i$'s are exactly the maximal subsets selected for budgeting by the Pruning procedure.
Let $B'$ be a feasible budget and let $B'_i = B' \cap C_i, i \in [z]$.
Below we show that $B \dpath B'$ which will finish the proof.

If $B' \setminus B = \emptyset$ then $B' \subseteq B$ and thus $B'$ cannot dominate $B$, thus we assume that $B' \setminus B \neq \emptyset$.
Consider the smallest index $i$ ($i \in [z]$) for which $B'_i \setminus B_i \neq \emptyset$ and consider some $b' \in B'_i \setminus B_i$.
By construction, $b' \in C_i\setminus B_i$.  Since $B_i$ is a maximal subset of $C_i$ such that the budget $(B \cap \{C_1, \ldots, C_{i - 1}\}) \cup B_i$ is feasible
(refer to Line~\ref{line:greedy} in Algorithm~\ref{algorithm:dob}),
it follows that $(B \setminus B') \cap \{C_1, \ldots, C_i\} \neq \emptyset$.
Let $b \in (B \setminus B') \cap \{C_1, \ldots, C_i\}$,
let $B_b = \{b\}$,
and notice that $B_b$ is a feasible budget since $B_b \subseteq B$ and $B$ is feasible.
Further, notice that $B \dom B_b$ since $B_b \subseteq B$.
It remains to show that $B_b \dpath B'$ which would finish the proof.

Recall that $b' \in C_i$ and consider the following two cases.
The cases differ in whether $b$ is in the same component $C_i$ as $b'$ (Second case) or in an earlier component $C_j$, $j<i$ (First case).
In each case we show that $B_b \dpath B'$ and thus conclude that indeed $B \dpath B'$ and thus SBA is Smith-consistent.

\mypara{First case}
If $b \in C_j$ for some $j < i$,
then the arc $(b', b)$ cannot be in the majority graph,
since $C_j$ is a Schwartz set in the majority graph not containing the vertices in $C_1, \ldots, C_{j - 1}$ but containing $b'$.
Thus, the arc $(b, b')$ is present in the weak majority graph, thus $b \gpath b'$.
Applying Lemma~\ref{lemma:crucial} we conclude that $B_b \dpath B'$.

\mypara{Second case}
If $b \in C_i$,
then $b$ and $b'$ are in the same Schwartz set $C_i$.
If $b$ and $b'$ are in different Schwartz components, then the arcs $(b, b')$ and $(b', b)$ are not present in the majority graph,
thus they are present in the weak majority graph.
In particular, $b \gpath b'$, and by applying Lemma~\ref{lemma:crucial} we conclude that $B_b \dpath B'$.
Otherwise, if $b$ and $b'$ are in the same Schwartz component then $b \gpath b'$ since each Schwartz component is a cycle in the majority graph. Again, applying~\ref{lemma:crucial} we conclude that $B_b \dpath B'$.
\end{proof}

SBA is efficient,
as the majority graph has linearly-many vertices and computing a Schwartz set can be done in polynomial time (Schwartz components correspond to strongly connected components; this is folklore). Furthermore, choosing a maximal subset of a Schwartz set which is the closest to the previous budget can be done efficiently (e.g., greedily choosing items budgeted by $B_{-1}$).

\begin{corollary}
  SBA is a polynomial-time Smith-consistent budgeting algorithm.
\end{corollary}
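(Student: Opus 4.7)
The plan is to combine Theorem~\ref{theorem:mainnew}, which already establishes Smith-consistency, with a running-time analysis of the two procedures composing Algorithm~\ref{algorithm:dob}; only the polynomial bound still needs justification.

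For the Ranking procedure, I would first observe that the majority graph on $\mathcal{P}$ can be constructed by one sweep over the vote profile that counts, for every pair of items, how many voters rank one above the other; this takes time $O(|\mathcal{P}|^2 \cdot |\mathcal{V}|)$. Then I would invoke the folklore identification of Schwartz components with source strongly connected components of the majority graph, so that each iteration of the while-loop reduces to a standard SCC computation (e.g., Tarjan), linear in the size of the current graph. Since every iteration strips at least one vertex, the loop runs at most $|\mathcal{P}|$ times, and the total cost is polynomial.

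For the Pruning procedure, the key observation is that Line~\ref{line:greedy} asks for an inclusion-\emph{maximal} subset of $C_i$ closest to $B_{-1}$, not a cost-\emph{maximum} one; this is what sidesteps the knapsack-style hardness that would otherwise threaten. A simple greedy implementation suffices: iterate once through $C_i \cap B_{-1}$, adding each item whose cost still fits within the remaining budget $\ell - \cost(B)$, and then iterate through $C_i \setminus B_{-1}$ in the same manner. The resulting $B_i$ is inclusion-maximal inside $C_i$ subject to the running budget cap and, by processing items of $B_{-1}$ first, also minimizes the cost of the symmetric difference with $B_{-1}$ among all such maximal subsets. Each component is thus handled in time linear in $|\mathcal{P}|$, and there are at most $|\mathcal{P}|$ components.

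Putting these together, SBA makes a constant number of calls to procedures that each run in time polynomial in $|\mathcal{P}|$ and $|\mathcal{V}|$, and combined with Theorem~\ref{theorem:mainnew} this yields the corollary. There is no real obstacle here; the only point deserving an explicit remark is the maximal-vs-maximum distinction in the pruning step, since readers may otherwise worry that selecting items under a cost cap is NP-hard. Everything else is routine bookkeeping.
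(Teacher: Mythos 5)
Your proposal follows essentially the same route as the paper, which justifies the corollary by the short argument preceding it: the majority graph has linearly many vertices, Schwartz sets are computed via the folklore correspondence with (source) strongly connected components, and a maximal subset of each component is chosen greedily, favoring items of $B_{-1}$. The one caveat is your side claim that processing $C_i \cap B_{-1}$ first yields the subset that exactly minimizes the cost of the symmetric difference with $B_{-1}$ among all inclusion-maximal choices --- that is an overstatement (exact minimization is itself a knapsack-type problem, e.g.\ with items of cost $3$ and $4$ and residual limit $4$), but it is harmless here, since the proof of Theorem~\ref{theorem:mainnew} uses only inclusion-maximality of each $B_i$, and the paper's own ``e.g., greedily choosing items budgeted by $B_{-1}$'' makes the same simplification.
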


\subsection{Reality-aware Budgeting and Hysteresis}\label{section:hysteresis}

Notice how SBA uses the previous budget $B_{-1}$ to break ties when selecting a maximal subset of each Schwartz set (in the pruning phase).
This is a special case of the Reality-aware voting rules considered in~\cite{realsoc} and is motivated, e.g., by Arrow's claim that “The status quo does have a built-in edge over all alternative proposals”~\cite[Page 95]{arrowbooksecond}.
More generally, considering the previous budget is both natural and useful, as budgeting scenarios fit very well in the framework of Reality-aware Social Choice~\cite{realsoc}:
  (1) Votes may be provided as a simple amendment to the previous budget, e.g., by adding/removing items or changing their quantities (indeed, recall that we can incorporate partial orders and also refer to the UI considerations at Section~\ref{section:outlook});
  and
  (2) The distance of a new budget to the previous budget can be computed, e.g., as the total cost of the items in the symmetric difference between the two (see the Distance-based Reality-aware Social Choice model~\cite{realsoc}).
   
Formally, the previous budget can be seen as a dichotomous ordered partition of the budget proposal,
where the first indifference class contains the previously-budgeted items and the second contains those which are not.
Then,
the previous budget can be used as follows:
  (1) Limit voters' flexibility, requiring them to submit preferences which are not too far from the previous budget;
  (2) Orthogonally, limit the distance of the new budget from the previous budget.
This might be a form of hysteresis (as discussed, e.g., in~\cite{shapiro2017foundations}),
limiting the rate of change of budgets year-to-year.

This perspective allows for the budget limit itself to be decided upon democratically, e.g., by taking the median of the proposed budget limits. This view of the previous budget as an ordered partition motivates the generalization described next.

\section{Incorporating Partial Orders}\label{section:partial orders}

In this section we adapt SBA to voters specifying partial orders. This would be especially useful for considering quantitative items (Section~\ref{section:quantitative items}).
The main observation is that we shall only care for the definition of when a voter \emph{prefers} some budget $B$ over another $B'$. For linear orders, we had Definition~\ref{def:prefers}. Next we consider ordered partitions.
Ordered partitions (linear orders with ties) are particularly interesting in the context of budgets,
as they can be understood as preference classes.

\begin{example}\label{example:power of ordered partitions}
Consider a budget proposal $\mathcal{P} = \{b_1, b_2, b_3, b_4, b_5\}$
and a vote $v : \{b_1\} \pref \{b_2, b_4\} \pref \{b_3, b_5\}$.
Such a vote can be understood as a voter wishing to first budget $b_1$;
then, if possible, budget $b_2$ and $b_4$, and only if the budget is not exhausted yet, budget $b_3$ and $b_5$.
\end{example}

The definition of when a voter prefers one budget to another extends as is from linear orders (i.e., Definition~\ref{def:prefers}) to ordered partitions. Notice that, in particular, it means that a voter is indifferent to two budgets having a nonempty symmetric difference in at last one component of her ordered partition.

\begin{example}
Let $v : \{a, b, c\} \pref \{d\}$ be an ordered partition over the budget proposal $\mathcal{P} = \{a, b, c, d\}$.
Let the cost of all items be $1$, and let $B = \{a, b\}$ and $B' = \{a, c\}$.
Then, $\pos_v(B \setminus B') = \pos_v(B' \setminus B) = \{1\}$,
thus $v$ neither prefer $B$ to $B'$ nor $B'$ to $B$.
\end{example}

It seems that ordered partitions provide enough practical expressive power for voters (see Example~\ref{example:power of ordered partitions}).
Still, the following definition is a generalization, wrt.\ to the minmax extension, for general partial orders.

\begin{definition}[Prefers for partial orders]\label{def:prefers for partial orders}
Let $v$ be a partial order over some budget proposal $\mathcal{P}$.
Then,
$v$ \emph{prefers} a budget $B \subseteq \mathcal{P}$ over a budget $B' \subseteq \mathcal{P}$
if the following hold:
  (1) For each item $b' \in B' \setminus B$ there is an item $b \in B \setminus B'$ which comes before $b'$ in the partial order;
  (2) For each item $b \in B \setminus B'$ there is no item $b' \in B' \setminus B$ which comes before $b$ in the partial order.
\end{definition}

Importantly, the proof of Theorem~\ref{theorem:mainnew} follows through with this definition, by adapting Lemma~\ref{lemma:crucial} to account for this adapted notion of preference among budgets.

\section{Quantitative Budgets}\label{section:quantitative items}

We extend our model to deal with quantitative budgets,
and show that SBA remains efficient even when quantities are given in binary encodings. To this end, we adapt the definition of a budget proposal to account for quantitative budgets.

\begin{definition}[Quantitative representation of budget proposals]
  In the \emph{quantitative representation} of budget proposals,
  a budget proposal is a set of tuples $(b_i, q_i)$,
  where $q_i$ is the \emph{quantity} of the budget item $b_i$.
  Each budget item $b_i$, having $q_i$ copies of it in the budget proposal,
  is associated with a \emph{cost function},
  where $F_{b_i}(q)$, $q \in [q_i]$, is the \emph{cost} of $q$ copies of $b_i$.
  The definition of cost is generalized to budgets in a natural way:
    The cost of a budget $B$ containing $q_i$ copies of budget item $b_i$, $i \in [z]$
    is
    $\sum_{j \in [z]} F_{b_i}(q_i)$.
\end{definition}

\begin{example}
Consider a proposal containing $5$ submarines, $3$ boats, and $1$ airplane.
Each submarine costs $7$ million, but there is a discount of $4$ million for each $5$ submarines;
each boat costs $2$ million; and an airplane costs $10$ million.
Then,
denoting submarines by $s$, boats by $b$, and airplanes by $a$,
the proposal is $\mathcal{P} = \{(s, 5), (b, 3), (a, 1)\}$ with cost functions:
  For submarines:
    $F_s(1) = 7$,
    $F_s(2) = 14$,
    $F_s(3) = 21$,
    $F_s(4) = 28$,
    $F_s(5) = 29$;
  For boats:
    $F_b(1) = 2$,
    $F_b(2) = 4$,
    $F_b(3) = 6$;
  For airplanes:
    $F_a(1) = 10$.
A voter can then specify:
``my first priority is 2 submarines; my second priority is 1 airplane and 3 boats; my last priority is another 3 submarines'';
by voting:
$v : \{(s, 2)\} \pref \{(a, 1), (b, 3)\} \pref \{(s, 3)\}$.
\end{example}

\begin{remark}
A voter, specifying an ordered partition over a quantitative budget proposal $\mathcal{P} = \{(b_i, q_i) : i \in [z]\}$,
can be represented as $v : C_1 \pref C_2 \pref \cdots \pref C_q$,
where each $C_j$ contains several tuples of the form $(b_j, z_{b_j})$, where $z_{b_j} \leq q_{b_j}$;
the sum of those quantities sums to $q_{b_j}$.
E.g., in the vote considered in the example above,
$v : \{(s, 2)\} \pref \{(a, 1), (b, 3)\} \pref \{(s, 3)\}$
corresponds to
$v : C_1 \pref C_2 \pref C_3$;
and there are, e.g., $5$ submarines in all of these components together.
\end{remark}

\subsection{Prefers with Quantities}

To adapt SBA to quantitative budget proposals,
first we adapt Definition~\ref{def:prefers} to define when does a voter prefer one budget over another.
We introduce the concepts of \emph{remainders} and \emph{ranked difference}. The remainder of a budget with respect to a vote consists of all the items ranked by a vote but not budgeted by the budget, keeping their original ranking in the vote.

\begin{definition}[Remainder]
Let $\mathcal{P}$ be a quantitative budget proposal,
$B$ a budget of $\mathcal{P}$,
and $v : C_1 \pref C_2 \pref \dots \pref C_z$ an ordered partition over $\mathcal{P}$.
The unbudgeted \emph{remainder} of $B$ with respect to $v$ is 
$Rem_B(v) = C'_1 \pref C'_2 \pref \dots \pref C'_z$, with components $C'_i$, $i \in [z]$, defined via the following equations:
    $B_0 = B$, and for each $i$, $i \in [z]$,
    $C'_i = C_i \setminus B_{i - 1}$
    and $B_i = B_{i - 1} \setminus C_i$.
\end{definition}

Intuitively,
the $i^{th}$ component of the remainder of a budget $B$ of $\mathcal{P}$ with respect to $v$
contains exactly the elements of the $i^{th}$ component of $v$ which are unbudgeted by $B$.
Given a vote $v$ and a budget $B$ over $\mathcal{P}$,
the union of the elements of $Rem_B(v)$ equals $\mathcal{P} \setminus B$,
which are all items unbudgeted by $B$.
We use remainders to compute the ranking of the elements in the symmetric difference between two budgets.

\begin{definition}[Ranked difference between budgets]
Let $\mathcal{P}$ be a quantitative budget proposal,
$v$ a ranking over $\mathcal{P}$,
and $B$ and $B'$ budgets over $\mathcal{P}$.
The \emph{ranked difference}
between the unbudgeted remainders of the two budgets
with respect to $v$ is $\Diff_{B,B'}(v) = Rem_{B'}(v) \setminus Rem_{B}(v)$, where the set difference operator is applied component-wise.
We define $\Ind_{B,B'}(v)$ as the set of indices of the components for which $\Diff_{B,B'}(v)$ is non-empty;
that is, denoting $\Diff_{B,B'}(v)$ as $C_1 \pref \dots \pref C_z$, we define $\Ind_{B,B'} = \{i : C_i \neq \emptyset\}$.
\end{definition}

That is,
while $Rem_B(v)$ collects and ranks items unbudgeted by $B$ according to their ranking by $v$,
$\Diff_{B,B'}(v)$ collects and ranks items budgeted by $B$ but not by $B'$ according to their ranking,
and each index in $\Ind_{B,B'}(v)$ names a component of $v$ that has at least one item budgeted by $B$ but not by $B'$.

Next, intuitively, a vote prefers one budget over another if it ranks higher all the items budgeted by the first budget but not the second, compared to all items budgeted by the second budget but not by the first.

\begin{definition}[Prefers for quantitative budgets]\label{def:preferstwo}
Let $\mathcal{P}$ be a proposal,
$v$ a ranking over $\mathcal{P}$,
and $B$ and $B'$ budgets over $\mathcal{P}$.
We say that \emph{$v$ prefers $B$ over $B'$} if it holds that
$\max(\Ind_{B,B'}(v)) < \min(\Ind_{B',B}(v))$,
where we set $max(\emptyset) = \min(\emptyset) = \infty$.
\end{definition}

E.g., if $B' \subset B$,
then $B$ is preferred over $B'$ (by any vote $v$) since $\Ind_{B',B}(v) = \emptyset$ while $\Ind_{B,B'}(v)$ is not.
In addition, \emph{prefers} is irreflexive since $\infty \nless \infty$.
This definition indeed generalizes the simple definition of the basic model, described in Section~\ref{section:basic model}.
To illustrate the notions defined above, consider the following example.

\begin{example}
Let $\mathcal{P} = \{(a, 3), (b, 1)\}$ be a quantitative budget proposal,
where the cost of each item is $1$.
Let $\mathcal{V} = \{v\}$ be a profile,
where $v : \{(a, 1)\} \pref \{(a, 1)\} \pref \{(b, 1)\} \pref \{(a, 1)\}$
(each occurrence of $(a, 1)$ is a different `$a$`-item).
Let $B = \{\{(a, 2)\}, \{b, 1\}\}$ and $B' = \{\{(a, 3)\}\}$ be two budgets.
Then,
$Rem_{B}(v) = \{\} \pref \{\} \pref \{\} \pref \{a\}$,
$Rem_{B'}(v) = \{\} \pref \{\} \pref \{b\} \pref \{\}$,
$\Ind_{B,B'}(v) = \{3\}$,
and 
$\Ind_{B',B}(v) = \{4\}$.
Thus, $v$ prefers $B$ over $B'$.
\end{example}

\subsection{SBA is Efficient for Quantitative Proposals}

A straightforward adaptation of SBA to quantitative budget proposals
would be to rename the items such that the quantity of each item would be one,
and then to use the description of SBA from Section~\ref{section:basic model}; this, however, results in pseudo-polynomial time.

It is possible to slightly modify SBA to run in polynomial time for
quantitative budget proposals.
To explain our modifications to SBA,
it is useful to identify the cause of the computational inefficiency of the straightforward adaptation of SBA:
  The constructed majority graph would have a vertex for each single copy of each item, possibly resulting in a super-polynomial-sized majority graph. To remedy this inefficiency, it seems natural to construct a majority graph with one vertex for each item,
representing all copies of that item:
  The resulting majority graph would indeed have polynomial number of vertices. However, we would not be able to represent all majority relations between the budget items, as such graph would be, roughly speaking, too ``coarse''.
To overcome this difficulty,
we will first initiate the majority graph as described above,
followed by splitting its vertices according to the way the corresponding copies of each item are split in the vote profile.

\begin{theorem}\label{theorem:poly}
 SBA can be adapted to quantitative budget proposals to run in polynomial time.
\end{theorem}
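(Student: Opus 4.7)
The plan is to follow the outline sketched in the paragraphs preceding the theorem: coarsen the straightforward (pseudo-polynomial) majority graph so that it has one vertex per item, then refine it by splitting each vertex according to how the corresponding copies are distributed across the components of the voters. I first describe the splitting, then explain how the Ranking and Pruning procedures operate on the resulting polynomial-size graph, and finally argue that Theorem~\ref{theorem:mainnew} transfers.

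For each item $b_j$ with $q_j$ copies, I look at every vote $v \in \mathcal{V}$ and record the cumulative counts of copies of $b_j$ appearing in successive components of $v$; these cumulative counts induce breakpoints in $\{0,1,\dots,q_j\}$. I take the union of all such breakpoints across voters, obtaining $0 = p_0 < p_1 < \cdots < p_{t_j} = q_j$, and define the \emph{chunks} of $b_j$ to be intervals of copies $[p_{k-1}+1,\, p_k]$ for $k \in [t_j]$. By construction, within any one chunk every voter places all copies into the same component of her ordered partition; thus the chunk behaves as a single indivisible object from the preference standpoint. The total number of chunks is at most $\sum_{v \in \mathcal{V}} (\text{number of (item, component) pairs in } v)$, which is linear in the input size and hence polynomial.

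Next I build a majority graph whose vertices are these chunks: for two chunks $x, y$ (possibly belonging to the same item or to different items), I add the arc $(y,x)$ whenever more than half of the voters rank the component containing $x$ above the component containing $y$. This graph has polynomially many vertices, so its Schwartz sets can be computed in polynomial time, and the Ranking procedure produces an ordered partition $V : C_1 \succ \cdots \succ C_z$ of the chunks exactly as before. In the Pruning procedure I traverse $C_1, C_2, \ldots$ greedily; the only difference is that when I add a chunk to the current budget I must evaluate the marginal cost $F_{b_j}(q+\Delta) - F_{b_j}(q)$ using the item's cost function, where $q$ is the number of copies of $b_j$ already budgeted and $\Delta$ is the size of the chunk. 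If a chunk does not fit entirely I may need to include a proper prefix of it, but since all copies inside a chunk are voter-indistinguishable, any such prefix is equally acceptable; ties with respect to the previous budget $B_{-1}$ are broken exactly as in the basic model.

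The correctness argument is the main obstacle and the place that needs care. I need to reverify Lemma~\ref{lemma:simplesimplecrucial}, Lemma~\ref{lemma:simplecrucial} and Lemma~\ref{lemma:crucial} with the quantitative notion of preference from Definition~\ref{def:preferstwo}. The crucial observation is that for any two budgets $B,B'$ the set $\mathrm{Ind}_{B,B'}(v)$ depends only on which chunks differ between $B$ and $B'$ and in which component each such chunk sits under $v$; thus pairwise majority comparisons between single-chunk budgets correspond exactly to the arcs of the split majority graph, and the weak-majority-path argument of Lemma~\ref{lemma:crucial} carries over verbatim with ``item'' replaced by ``chunk''. With these adapted lemmas in hand, the inductive case analysis in the proof of Theorem~\ref{theorem:mainnew} goes through unchanged on the split graph, so the adapted SBA is Smith-consistent, and because both the graph and all subsequent computations are polynomial in the binary-encoded input, the overall running time is polynomial, establishing the theorem.
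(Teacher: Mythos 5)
Your proposal matches the paper's own argument: the paper likewise splits each quantitative item at the cumulative ``splitting points'' induced by every voter's components (its $\splittt_v(b)$ and their union $\splittt(b)$), obtaining one vertex per resulting interval of copies, observes that the number of such vertices is polynomial because each splitting point is chargeable to some vote, and that no vote distinguishes copies within an interval, so all majority relations are captured and SBA then runs as before. Your additional remarks on marginal costs in pruning and on transferring the lemmas to ``chunks'' are elaborations of the same construction, not a different route.
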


\begin{proof}
We describe a modification to SBA,  denoted by ESBA, which differs from SBA in the way it constructs the majority graph.
Specifically, first construct a majority graph $G$ having a vertex $(b, q)$ for each tuple $(b, q)$ in the quantitative budget proposal.
Then, iteratively process each tuple $(b, q)$ in the quantitative budget proposal, initially corresponding to a vertex $(b, q)$ in $G$, as follows. For each vote $v : C_1 \pref \dots \pref C_z$,
represented over the quantitative representation of budget proposals, and where $b$ appears in the components $\mathcal{C} = \{C_{i_1}, C_{i_2}, \ldots, C_{i_t}\}$ ($i_j \in [z]$, $j \in [t]$)
with corresponding quantities $q_{i_1}, q_{i_2}, \ldots, q_{i_t}$,
such that $(b, q_{i_j}) \in C_{i_j}$ ($i_j \in [z]$, $j \in [t]$),
notice that $\sum_{j \in [t]} q_{i_j} = q$, and create the following sequence of numbers, denoted by $\splittt_v(b)$:
  $[q_{i_1}, q_{i_1} + q_{i_2}, \ldots, \sum_{j \in [t]} q_{i_j}]$.
Intuitively, $\splittt_v(b)$ collects $v$'s \emph{splitting points} of $b$. E.g., consider a vote $v : (a, 2) \pref (b, 2) \pref (a, 1)$:
  Notice that $v$ splits the $3$ occurrences of $a$ into two components, where in the first component $v$ has two of its occurrences while in the second component $v$ has the third occurrence of $b$. Correspondingly, $\splittt_{v_3}(a) = [2, 3]$.

After computing $\splittt_v(b)$ for each vote $v$,
merge those sequences into a single sequence denoted by $\splittt(b)$;
  that is, $\splittt(b) = \cup_{v \in \mathcal{V}} \splittt_v(b)$,
  where $\mathcal{V}$ is the given vote profile.
Intuitively, $\splittt(b)$ collects the \emph{splitting points} of $b$ across all voters.
Recall that currently the majority graph has one vertex $(b, q)$ corresponding to the budget item $b$;
first, we delete $(b, q)$ from the majority graph.
Then,
denote $\splittt(b) = [q_1, \ldots, q_x]$
and create the following $x$ vertices
$\{(b, [1, q_1]), (b, [q_1 + 1, q_2]), \ldots, (b, [q_{x - 1} + 1, q_x])\}$.
Intuitively, a vertex $(b, [l, r])$ stands for the $l^{th}$, $(l + 1)^{th}$, $\ldots$, $(r - 1)^{th}$, and $r^{th}$ occurrence of $b$.

The above process is done for each tuple $(b, q)$ in the quantitative budget proposal. Importantly, the number of vertices in the resulting majority graph is polynomially bounded by the input,
as any splitting point in $\splittt(b)$, and thus each vertex in the majority graph, can be charged to at least one vote,
and each vote can be responsible only for polynomially-many vertices. Furthermore, the constructed majority graph can represent all majority relations, as no vote splits a budget item $b$ ``finer'' than $\splittt(b)$.
\end{proof}

\section{Hierarchical Budgeting}\label{section:hierarchy}

Budgets of complex organizations and societies are constructed hierarchically; here we use our approach, specifically the fact that SBA works in two independent phases, to achieve that.

Hierarchical budgets are constructed in two phases.  First, each section of the organization performs an independent budgeting process that decides upon the priorities within the section, but without a given budget limit. Then,
a consolidating budgeting process determines the allocation of funds to the different sections by providing each with a \emph{section budget limit}, the sum of which is the consolidated budget limit.  Each section's ordered partition is pruned by its section budget limit.  The resulting section budgets are combined into a consolidated feasible budget.
E.g., consider a government with several ministries:
  each ministry conducts its own budgeting process and then the government conducts its overall budgeting process by allocating resources to each of the ministries.  

An important feature of our algorithm is that it does not require fixing an a priori section budget limits, as the ranking procedure does not depend on the limit.  This flexibility turns out to be crucial for the algorithmic construction of hierarchical budgets. 
It also has important practical implications:
  The allocation of section budget limits need not be done a priory, but can use the ranking of the section to take into account and deliberate the implications  of different section budget limits on the budgeting of various items.  So while the government might not intervene in the ranking of the section (perhaps respecting the subsidiary principle), it has some control of what will be funded and what will not by choosing a specific section budget limit.

Our hierarchical budgeting method can be applied in two scenarios:
One, where each section budget is created by votes of experts/stakeholders, and then the consolidated budget is created by the votes of the sovereign body. Another, when the same body decides on both section budgets and the consolidated budget. This method helps the sovereign body to focus in turn on each section budget independently, and then to consolidate the results.

Our next description has only two levels of hierarchy,
however the process naturally generalizes to further levels.
Given $n$ sections, we begin by creating $n$ budget proposals $\mathcal{P}_i, i \in [n]$, one for each section.
Then, the voters of each section vote independently on each section proposal. We separately apply only the ranking procedure to the votes of each section to produce the section rankings $V_i, i \in [n]$. We arbitrarily linearize the unary rankings; denote the resulting sequences $V^L_i, i \in [n]$ and define $V^L_i[k]$ to be the set of the first $k$ items of $V^L_i$.
We then define $n$ ``artificial'' derived budget items $s_i, i \in [n]$, one for each section,
and associated cost functions $F_i(n) = \cost(V^L_i[n]-\cost(V^L_i[n-1])$.
The consolidated budget proposal is then $\mathcal{P} = \bigcup_{i\in [n]} (s_i, n_i)$, where $n_i$ is the number of of items in $V^L_i$.  
Then, each vote on the consolidated budget proposal is a ranking of the consolidated proposal $\mathcal{P}$, which consists of only $s_i$'s. (Practically, however, when a voter chooses, say, to rank first six $s_3$'s, this would be after the voter has looked ``under the hood'' and knows what are the first six items of the budget of the $s_3$ section.) Given a budget limit $\ell$ on the consolidated budget, all votes on the consolidated budget proposal $\mathcal{P}$ are then treated as usual:
  We apply the ranking procedure and then the pruning procedure with the limit~$\ell$, and produce a high-level budget.

With respect to the allocation between sections, the properties of this consolidated budget are the same as the properties of the non-hierarchical budgets produced by our algorithm. Votes on the consolidated proposal can prioritize items among the different sections, and know what are they prioritizing, but cannot affect,
at the consolidation stage, the relative priority of the underlying ``true'' items within each section.

To compare constructing a budget to constructing it hierarchically, we assume that the voters have preferences over all items, of all sections, and that, to ease the budgeting process, they decide to construct it hierarchically.
Then,
we compare the two options:
  (1) Using SBA directly on the preferences of the voters;
  and
  (2) Using SBA hierarchically,
      first on each section independently,
	  and then consolidating the budget at a high level, over the sections, as described above.
The resulting budget might be different.

\begin{example}
Consider two sections, Section A, containing items $a_1$ and $a_2$, and Section B, containing item $b$.
Let all items cost $1$, and set the budget limit to~$2$.
Assume the following voters with preferences over all items of all sections together:
  $v_1 : a_1 \pref a_2 \pref b$,
  $v_2 : a_1 \pref b \pref a_2$,
  $v_3 : a_2 \pref b \pref a_1$.
The budget $\{a_1, a_2\}$ is a Condorcet-winning budget
(notice that both $v_1$ and $v_2$ prefer it to $\{a_2, b\}$ while both $v_1$ and $v_3$ prefer it to $\{a_1, b\}$);
thus, using SBA directly, disregarding hierarchy, would result in $\{a_1, a_2\}$ being the winning budget,
while using SBA hierarchically would result in a different winning budget:
  Using the ranking phase of SBA on Section A would give the ordered partition $a_1 \pref a_2$, and using the ranking phase of SBA on Section B would give the ordered partition $b$. Then, considering those ordered partitions and employing SBA in the high-level,
would result in the budget $\{a_1, b\}$.
\end{example}

There are certain cases in which using SBA directly (disregarding hierarchy) and using it hierarchically is guaranteed to give the same results. One simple, extreme such case is where all voters agree unanimously on the order of the low-level section items of all sections.

\section{Discussion}\label{section:outlook}

We identified certain aspects, generally disregarded in the growing literature on PB,
which are nevertheless useful in employing such methods to produce budgets in a democratic way. Accordingly, we provided a formal treatment of a general setting of democratic budgeting which
(a) Takes into account the previous (last year's) budget;
(b) Allows voters to specify partial orders or special cases of such; (c) Incorporates quantitative, indivisible items of arbitrary costs, capturing, e.g., items with decreasing marginal costs; and (d) Allows for hierarchical construction of budgets.
We generalized the Condorcet criterion to democratic budgeting
and described a polynomial-time Condorcet-consistent participatory budgeting. Our algorithm is designed in such a way that it supports hierarchical budget construction, making it applicable to entire, high-stakes budgets.
Next, we discuss avenues for further research.

\mypara{Complementarities}
Our elicitation method allows for quite complex user preferences, but does not incorporate complementarities between budget items.  Such inter-relations between budget items have not been considered in the literature on PB. Having efficient ways for users to express such, while avoiding exponential blow-up, is an interesting future research direction.

\mypara{More Axiomatic Considerations}
Here we concentrated on Condorcet-consistency. This makes our methods suitable for situations where majority decisions are morally justified and not to situations where proportionality is desired (where one can use, e.g., the methods of Aziz et al.~\cite{aussieone}; notice that usually majoritarianism and proportionality are somewhat mutually exclusive). On a different note, one might consider other axiomatic properties, such as monotonicity. 

\mypara{Aspects of Democratic Budgets}
One might consider other methods of democratic budgeting,
e.g., investigating how to generalize existing methods of participatory budgeting to be applied to the more general setting of democratic budgeting.

\mypara{Breaking Ties}
SBA picks a maximal subset of each Schwartz set which is the closest to the previous budget. We suggest further to (1) Sort items by cost, favoring cheaper items, resulting in budgets with more items; and (2) Sort items by indices, favoring an item which was never budgeted to another item of a type which was already budgeted several times,
resulting in more diverse budgets.
One might also aim at solving the underlying Knapsack problem,
by computing (e.g., using dynamic programming) the maximum-sized subset of each Schwartz set.

\mypara{User Interface Considerations}
For deployment, and if a vote is a weak order, a convenient UI is needed to specify it. Two simple user interfaces are a sequence of pairs (item, quantity) and a pie chart. We interpret the former as an ordered partition where each partition consists of one pair. A pie chart might be interpreted as one component with multiple pairs, but we propose to interpret it differently, as follows:
  If not all items can be funded, then items should be funded proportionally according to the cost ratios specified by the chart. As our budgeting model refers to discrete items, each with a specific cost, such proportional allocation can only be approximated. To exploit the full power of ordered partitions, a user interface should cater for a sequence of pie charts.
Furthermore,
the user interface may directly incorporate the previous budget,
viewed as a dichotomous ordered partition,
and allow voters to move items from the two indifference classes,
and also to provide for the possibility of refining them to more classes.

\bibliographystyle{ieeetr}
\bibliography{bib}

\end{document}